\newtheorem{theorem}{Theorem}[section]
\newtheorem{claim}[theorem]{Claim}
\newtheorem{obs}[theorem]{Observation}
\newcommand{\Id}{\pi^{\textsf{Id}}}
\newcommand{\poly}{\mathrm{poly}}
\newcommand{\w}{\textbf{w}}
\newcommand{\x}{{x}}
\title{{Explicit Good Codes Approaching  Distance 1 in Ulam Metric }}
\author{ Elazar Goldenberg\\The  Academic College of Tel Aviv-Yaffo\\ \texttt{elazargo@mta.ac.il}\\ \and Mursalin Habib\\  Rutgers University\\ \texttt{mursalin.habib@rutgers.edu}\and Karthik C.\ S.\footnote{This work was supported by the National Science Foundation under Grant CCF-2313372 and by the Simons  
 Foundation, Grant Number 825876, Awardee Thu D. Nguyen.}\\  Rutgers University\\ \texttt{karthik.cs@rutgers.edu}}
\date{}
\begin{document}

\maketitle
\begin{abstract}

    The Ulam distance of two permutations on $[n]$ is $n$ minus the length of their longest common subsequence. In this paper, we show that for every $\varepsilon>0$, there exists some $\alpha>0$, and an infinite set $\Gamma\subseteq \mathbb{N}$, such that for all  $n\in\Gamma$, there is an explicit set $C_n$ of $(n!)^{\alpha}$ many permutations on $[n]$,  such that every pair of permutations in $C_n$ has pairwise Ulam distance  at least $(1-\varepsilon)\cdot n$.\vspace{0.15cm}
    
    Moreover,   we can compute the $i^{\text{th}}$  permutation in $C_n$ in $\poly(n)$ time and can also decode in $\poly(n)$ time, a permutation $\pi$ on $[n]$  to its closest permutation $\pi^*$ in $C_n$, if the Ulam distance of $\pi$ and $\pi^*$ is less than $ \frac{(1-\varepsilon)\cdot n}{4} $.\vspace{0.15cm}

    Previously, it was implicitly known by combining works of Goldreich and Wigderson [Israel Journal of Mathematics'23] and Farnoud, Skachek, and   Milenkovic [IEEE Transactions on Information Theory'13] in a black-box manner, that it is possible to explicitly construct $(n!)^{\Omega(1)}$ many permutations on $[n]$,  such that every pair of them have pairwise Ulam distance  at least $\frac{n}{6}\cdot (1-\varepsilon)$, for any $\varepsilon>0$, and the bound on the distance can be improved to $\frac{n}{4}\cdot (1-\varepsilon)$ if the construction of Goldreich and Wigderson is directly analyzed in the Ulam metric. 
\end{abstract}
\clearpage

\section{Introduction}
Permutation codes, pioneered by Slepian in 1965 \cite{slepian1965permutation}, constitute a class of error correction codes extensively explored in both Combinatorics and Information Theory \cite{cameron2010permutation, chee2012efficient,hunt2015decoding}. They have garnered significant attention for their relevance in applications such as Flash memory \cite{jiang2010correcting,jiang2010rank,tamo2010correcting} and Power-line communication \cite{blake1979coding,colbourn2004permutation}. 
Over the past decade, several works have studied Permutation codes in the Ulam metric, and that is the focus of study in this paper.

 The Ulam distance of two permutations on $[n]$ is $n$ minus the length of their longest common subsequence.  We say that a set $S$ of permutations on $[n]$ is a code in the Ulam metric with \emph{distance} $\Delta$ and \emph{rate} $R$ if the minimum Ulam distance between two distinct permutations in \(S\) is $\Delta$  and $R=\frac{\log |S|}{\log (n!)}$. The \textit{relative distance} of   \(S\) is simply $\frac{\Delta}{n}$. 
 
A formal investigation of codes in the Ulam metric was started by Farnoud, Skachek, and
 Milenkovic \cite{farnoud2013error} (although aspects of it were studied by \cite{levenshtein1992perfect,beame2009longest}). Since then, additional  follow-up works have appeared on codes in the Ulam metric \cite{golouglu2015new,hassanzadeh2014multipermutation,kong2016nonexistence}, but the explicit construction of a \emph{good code} in the Ulam metric (i.e., codes with positive constant rate and positive constant relative distance) remained elusive. On that front, it was proved in  \cite[Corollary 26]{farnoud2013error} that a good code over permutations in the Hamming metric can be modified to obtain a good code in the Ulam metric. Goldreich and Wigderson \cite{goldreich2020constructing} recently constructed such an explicit good code in the Hamming metric. 
 
 However, the question of explicitly constructing codes in the Ulam metric with positive constant rate and relative distance approaching 1 remained open. If we uniformly and independently sample $(n!)^{\Omega(1)}$ many permutations on $[n]$, then by noting that the distance of two random permutations is $n-\Theta(\sqrt{n})$ \cite{erdos1935combinatorial,baer1968natural}, and applying the  Chernoff bound, we obtain codes in the Ulam metric with positive constant rate and relative distance arbitrarily close to 1. 
 But can we derandomize the above sampling process by providing an explicit construction? 

 High distance codes (with positive constant rate) have numerous applications in coding theory (for example in code concatenation), and also in theoretical computer science, such as in areas like sketching and hardness of approximation.

\subsection{Our Contribution}

Our main contribution is the construction of infinite families of permutation codes in the Ulam metric with relative distance arbitrarily close to 1 and rate (depending on the relative distance) bounded away from zero.

\begin{theorem}
\label{thm:main}
    Let \(q\) be a sufficiently large positive integer. There exists a constant \(\epsilon > 0\) (depending on $q$) such that for every $\ell\in\mathbb{N}$ and \(n:= q^{\ell}\), there is an integer $M$ and an injective function  \(S\colon [M] \to \mathcal{S}_n\) (which is a code in the Ulam metric) with the following properties.
    \begin{itemize}
        \item \(M \geq \left(n!\right)^{\epsilon}\), i.e., rate of $S$ is at least $\epsilon$.
        \item The relative distance of $S$ is $\delta_S\ge  \left(1-\frac{1}{\mathrm{poly}(q)}\right)$. 
\item    There is an algorithm that, given \(\x\in [M]\), outputs $S(\x)$ in \(\poly(n)\) time.
\item There is a polynomial time algorithm that takes a permutation \(\pi\in \mathcal{S}_n\) as input, and outputs \(\x^*\in [M]\) (if it exists) such that the Ulam distance between $\pi$ and $S(x^*)$ is less than \(\frac{\delta_S}{4} n\).
\end{itemize}
\end{theorem}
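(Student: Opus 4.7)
The plan is to exploit the product structure $n=q^\ell$ by identifying $[n]$ with $[q]^\ell$ under lexicographic order and constructing each codeword as a composition of $\ell$ ``axis-wise'' slice permutations, each independently drawn from a good base permutation code $C_0\subseteq \mathcal{S}_q$. For the base code, I take $C_0$ with $|C_0|\ge (q!)^{\epsilon_0}$ and pairwise Ulam distance at least $(1-1/\poly(q))q$. For a fixed sufficiently large $q$, such $C_0$ exists by the standard probabilistic argument (uniformly random permutations on $[q]$ have LCS of order $\sqrt{q}$), and since $q$ is a constant it can be hard-coded by brute-force search; its nearest-codeword decoder then runs in $q^{O(1)}$ time by exhaustion.

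For the main construction, fix for each level $i\in[\ell]$ and each ``context'' $y\in[q]^{\ell-1}$ a base codeword $\sigma^{(i)}_y\in C_0$, and let $\pi^{(i)}$ be the permutation of $[q]^\ell$ that sends $(x_1,\ldots,x_\ell)$ to $(x_1,\ldots,x_{i-1},\sigma^{(i)}_{x_{-i}}(x_i),x_{i+1},\ldots,x_\ell)$, i.e., that permutes each axis-$i$ slice by the designated base codeword. Set $\pi:=\pi^{(\ell)}\circ\cdots\circ\pi^{(1)}$. The total number of codewords is $|C_0|^{\ell q^{\ell-1}}\ge (q!)^{\epsilon_0\ell q^{\ell-1}}$; since $\log((q^\ell)!)\approx q^\ell\ell\log q$ by Stirling, the resulting rate is $\Omega_q(\epsilon_0)=\Omega_q(1)$, as required.

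The main challenge is to prove that any two distinct codewords, say $\pi$ built from $\{\sigma^{(i)}_y\}$ and $\rho$ built from $\{\hat\sigma^{(i)}_y\}$, satisfy $\mathrm{LCS}(\pi,\rho)\le n/\poly(q)$. Using the left-invariance of LCS, $\mathrm{LCS}(\pi,\rho)=\mathrm{LIS}(\rho^{-1}\pi)$, and the $\ell$-fold composition structure lets one express $\rho^{-1}\pi$ as a product of conjugates of ``slice-local'' difference permutations $\tau_i$ whose restriction to each axis-$i$ slice is $(\hat\sigma^{(i)}_y)^{-1}\sigma^{(i)}_y$; by the base-code bound this restriction has LIS at most $q^{1-\Omega(1)}$, so summing over the $q^{\ell-1}$ slices at level $i$ gives $\mathrm{LIS}(\tau_i)\le n/\poly(q)$. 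The technical heart of the argument is to show, by induction on $\ell$, that the outer conjugations by the earlier-level permutations inflate the overall LIS by at most a $\poly(q)$ factor; this should follow from a careful combinatorial analysis of how the ``generalized slices'' obtained by pulling axis-$i$ slices back through lower-level permutations sit in lexicographic order, combined with a pigeonhole across levels, and this LIS-under-nested-conjugation step is where the bulk of the technical work will sit.

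Finally, encoding and decoding follow the hierarchical structure. Encoding: decompose $\x\in[M]$ into the $\ell\cdot q^{\ell-1}$ base-code indices and evaluate $\pi(x)$ at each input by successively composing the slice permutations, all in $\poly(n)$ time. Decoding: exploit that coordinate $i$ of $\pi(x)$ depends only on $\pi^{(1)},\ldots,\pi^{(i)}$, and peel off levels $i=1,2,\ldots,\ell$ in order; at each step invoke the $C_0$-decoder on each axis-$i$ slice of the current partially-peeled received word, and use the already-decoded $\pi^{(1)},\ldots,\pi^{(i-1)}$ to identify the correct context $y$. A careful per-slice error analysis using the $\delta_S/4$ input radius ensures each $C_0$-invocation stays within its decoding radius, so the whole procedure recovers $\x^*$ in $\poly(n)$ time.
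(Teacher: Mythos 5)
Your high-level architecture matches the paper's: identify $[n]$ with $[q]^\ell$, and build each codeword as an $\ell$-stage process in which, at stage $i$, every axis-$i$ slice of size $q$ is shuffled by a permutation drawn from a small inner code on $[q]$ with near-maximal pairwise Ulam distance. But your construction is missing the ingredient that makes the distance bound true: an \emph{outer code} constraining how the per-slice choices are made. You let every $\sigma^{(i)}_y$ range freely over $C_0$, giving $|C_0|^{\ell q^{\ell-1}}$ codewords. Then two distinct codewords can differ in the choice of $\sigma^{(i_0)}_{y_0}$ for a single pair $(i_0,y_0)$ and agree everywhere else; the resulting permutations differ on at most $q$ of the $n=q^\ell$ positions, so their LCS is at least $n-q$ and their Ulam distance is at most $q=o(n)$. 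Your claimed bound $\mathrm{LCS}(\pi,\rho)\le n/\mathrm{poly}(q)$ is therefore false for this code, no matter how the conjugation analysis goes. The paper avoids this by requiring the $\tfrac{n}{q}$ inner-code indices used at stage $i$ to form a codeword $w^{(i)}$ of an asymptotically good $p$-ary Hamming code $C$ of length $\tfrac{n}{q}$ (e.g., an AG code); then at the first stage where two codewords differ, they must use different inner permutations on a $\delta_C$ fraction of the slices, which is what drives the distance. This constraint also reduces the rate from your claimed $\Omega(\epsilon_0)$ to roughly $\epsilon_D R_C/q$, and it is exactly what makes decoding possible: after a per-slice nearest-codeword guess, one runs the decoder of $C$ to fix the slices that were guessed wrong. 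In your scheme a single slice hit by $q/2$ errors (well within the $\tfrac{\delta_S}{4}n$ budget) is unrecoverable.

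A secondary point: the ``LIS-under-nested-conjugation'' step you flag as the technical heart is both riskier and unnecessary in the paper's treatment. Losing even one multiplicative $\mathrm{poly}(q)$ factor on a per-level bound of $n/\mathrm{poly}(q)$ could wipe out the gain entirely. The paper instead takes the \emph{first} stage $j$ at which the shufflers differ (so the two processes coincide up to stage $j-1$, and the stage-$j$ slices are literally the same sets of symbols in both strings), applies subadditivity of LCS over that partition, and then observes that later stages never change the relative order of symbols \emph{within} a stage-$j$ slice, since the $j$-th base-$q$ digit of every location is frozen after stage $j$. Hence the LCS bound established at stage $j$ persists exactly, with no loss. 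You should restructure your distance argument around this observation rather than trying to control inflation under conjugation.
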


In the above theorem, as we increase $q$, we obtain codes with relative distance approaching 1 while the rate is still a positive constant that depends on $q$ (and we treat $q$ as a large constant). We remark here that since the Hamming distance between two permutations is always at least as large as their Ulam distance, the above result also gives the first construction of permutation codes in the Hamming metric to achieve positive constant rate and relative distance arbitrarily close to 1.

To motivate the construction in Theorem~\ref{thm:main}, we describe now a random construction procedure to sample a random permutation. Let $Q_N$ be the boolean hypercube on vertex set $\{0,1\}^N$. Imagine placing $2^N$ balls numbered by the set $\{0,\ldots , 2^N-1\}$ on the vertices of $Q_N$ based on their binary representation (thus there is a ball placed on each vertex). We then begin an $N$-stage shuffling process. At stage $i$, we only consider the edges of $Q_N$ obtained by changing the $i^{\text{th}}$ bit (i.e., the pair of vertices sharing the edge differ on the $i^{\text{th}}$ bit). For each of those edges, we independently and uniformly at random decide whether to swap the balls placed on the two endpoints of the edge. We do the above shuffling process sequentially starting from $i=1$ and ending at $i=N$. The final permutation generated by this process is done by reading the number on the balls located at $00\cdots00, $ followed by $00\cdots01$, then $00\cdots10$, and so on until we reach $11\cdots 11$ (i.e., in increasing order based on binary representation of the vertices). See Figure~\ref{fig1} for an illustration of this process.

\begin{figure}[h]
    \centering
\resizebox{\textwidth}{!}{\includegraphics[scale=0.9]{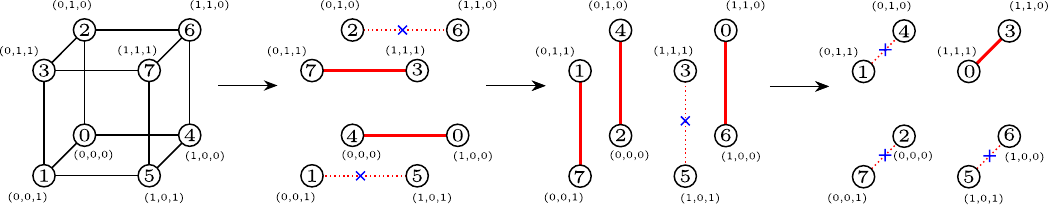}}    \caption{In the figure, we consider setting where $N=3$. At stage 0, we have the identity permutation $(0,1,2,3,4,5,6,7)$. In stage 1, we  decide to swap balls on the edge $000$ and $100$ and on the edge $011$ and $111$ (while not swapping the balls on the edge $010$ and $110$ and the edge $001$ and $101$). Thus, we obtain the permutation $(4,1,2,7,0,5,6,3)$. In stage 2, we decide to swap balls on the edge $000$ and $010$, the edge $001$ and $011$,  and on the edge $100$ and $110$, obtaining the permutation $(2,7,4,1,6,5,0,3)$. Finally, in stage 2, we decide to swap balls only on the edge $110$ and $111$,  obtaining the permutation $(2,7,4,1,6,5,3,0)$ which is our output. Note that all decisions were obtained as outcomes of some coin tosses. }
    \label{fig1}
\end{figure}

One may view the construction of good permutation codes (in the Hamming metric) in \cite{goldreich2020constructing} as a derandomization of the above process, wherein instead of randomly choosing which edges to swap at each stage, the choice is done through a codeword from a good code in the Hamming metric on the Boolean alphabet. 
To the best of our knowledge, either combining \cite[Corollary 26]{farnoud2013error} with the construction in \cite{goldreich2020constructing} or directly analyzing the construction in \cite{goldreich2020constructing} cannot yield positive rate codes in the Ulam metric of relative distance higher than $1/4$. Thus, we need to generalize the above process over larger alphabets while also finding an analogue to swapping that works in the generalized setting.   We discuss more about the comparison of our construction with  \cite{goldreich2020constructing} in Section~\ref{sec:gw23}.

\subsection{Related Works}
The earliest result for codes in the Ulam metric can be found in~\cite{levenshtein1992perfect}, wherein they describe a way of constructing a set of permutations of \([n]\) capable of correcting a single deletion. This, in our language, is equivalent to a code in the Ulam metric with distance 2. Furthermore, this code has size \((n-1)!\), which is optimal for this distance.
In 
\cite{beame2009longest} the authors construct \(O(\log n)\) permutations with pairwise Ulam distance roughly equal to \((n-\sqrt{n})\). Asymptotically speaking, this is a rate 0, distance 1 code.
In \cite{farnoud2013error}, the authors systematically study codes in Ulam metric proving analogues of the Gilbert-Varshamov bound and Singleton bound for the Ulam metric. Moreover, they provide construction of high rate codes albeit with $o(1)$ relative distance. In~\cite{golouglu2015new}, the authors give better upper and lower bounds on the size of codes in the Ulam metric using techniques from  integer programming and also show that the Singleton-type bound derived in~\cite{farnoud2013error} is not tight. In~\cite{hassanzadeh2014multipermutation}, the study of permutation codes in the Ulam metric is extended to multipermutations. In
\cite{kong2016nonexistence} the authors prove the non-existence of perfect codes in the Ulam metric. Finally, in \cite{goldreich2020constructing}, the authors give the first explicit construction of asymptotically good permutation codes in the Hamming metric. 

\section{Notations}
In this section, we detail some notations that will be used in the rest of the paper.

For a positive integer \(n\), we denote  \([n]:=\{0, 1, 2, \ldots, n-1\}\). We will denote by \(\mathcal{S}_n\) the set of all length \(n\) strings over \([n]\) with no repeating letters\footnote{The choice of notations here is intentionally made to coincide with the notation for the symmetric group of order $n$, although we never use the group operator in this paper.}. Such strings can be viewed as permutations of the alphabet \([n]\), and we will often call them as such, although we  treat them merely as strings throughout this paper. We refer to the string \(\Id\in \mathcal{S}_n\), where for all \(i\in [n]\), \(\Id[i] = i\), as the \textit{identity permutation}. 


The Hamming distance between two equal-length strings \(x\) and \(y\), denoted by \(d_H(x, y)\), is the number of locations where \(x\) and \(y\) have different symbols. We will use \(L(x, y)\) to denote the length of a longest common subsequence of \(x\) and \(y\). Central to our discussion is the notion of the \textit{Ulam distance} between two permutations. Given \(\pi, \pi'\in \mathcal{S}_n\), the Ulam distance between \(\pi\) and \(\pi'\), denoted by \(d_U(\pi, \pi')\), is the least number of symbol relocations required to transform \(\pi\) into \(\pi'\). Something that we make use of many times throughout the paper is the fact~\cite{aldous1999longest} that for every \(\pi, \pi'\in \mathcal{S}_n\), we have \(d_U(\pi, \pi') = n - L(\pi, \pi')\).

\section{Construction of the Code}

\label{sec:construction}

In this section, for a fixed $q,\ell,$ and $n=q^\ell$, we construct the function $S$ specified in Theorem~\ref{thm:main}. It is more convenient to present the construction and analysis if we think of $S$ through its image set which is a subset of $\mathcal{S}_n$, and thus we switch to referring to $S$ as a subset of strings in $\mathcal{S}_n$. 
To describe the elements in \(S\), we need two main ingredients. The first is a \textit{ground permutation set} \(D=\{\sigma_0, \sigma_1, \ldots, \sigma_{p-1}\}\) of \(p\) permutations of \([q]\) for some integer \(p\). The second is a set \(C\) of strings over \([p]\) of length \(\frac{n}{q}\), referred to hereafter as the \textit{shuffler set}. Although our construction can be carried out with any choice of the sets \(D\) and \(C\), it is helpful to think of the set \(D\) as a fixed-size high-distance code in the Ulam metric and the set \(C\) as an asymptotically good \(p\)-ary code in the Hamming metric (the exact choice of parameters will be specified in Section~\ref{sec:params}). Our construction serves the intent of using the properties of \(C\) to ``lift'' the distance of the fixed-size code \(D\) to obtain a family of high-distance good codes in the Ulam metric.

Given the sets \(D\) and \(C\), we now construct \(S\) as follows. We view each string in \(S\) as being generated by an \(\ell\)-stage process. The process begins with the identity permutation \(\pi^{(0)}:=\Id\), and at the end of stage \(i\in\{1, 2, \ldots, \ell\}\), we generate the permutation \(\pi^{(i)}\) from the permutation  \(\pi^{(i-1)}\) generated at the end of the previous stage. The mechanism of generating \(\pi^{(i)}\) from \(\pi^{(i-1)}\) is dictated by some string \(w^{(i)} \in C\), which we call the \textit{shuffler} for stage \(i\).
Intuitively,  each symbol of the shuffler specifies how to shuffle the contents of some group of locations in \(\pi^{(i-1)}\), and we introduce below a few sentences of formalism before we specify the precise description. 


Consider associating a string \(\phi_q(m)\in [q]^\ell\) with each \(m\in [n]\), where \(\phi_q(m)\) is the base-\(q\) representation of \(m\). 
In what follows, we will index permutations of length $n$ using the set of strings in $[q]^{\ell}$. In other words, for every \(\pi\in \mathcal{S}_n\) and \(m\in [n]\), we will write \(\pi[\phi_q(m)]\) to denote the symbol at location \(m\) in \(\pi\).

We are now ready to describe the details of our construction. Fix some \(i\in\{1, 2, \ldots, \ell\}\). 
 During the \(i^{\text{th}}\) stage, for every $(\alpha,\beta)\in [q]^{i-1}\times    [q]^{\ell-i}$, we first consider groups of indices/locations  \(I_{\alpha, \beta}^{(i)}:=\{m: \phi_q(m)=\alpha x \beta, \text{ for some }x\in[q]\}\subseteq [n]\). Note that there are precisely \(\frac{n}{q}\) such groups, which is the same as the length of the shuffler \(w^{(i)}\). Next, we associate each group \(I_{\alpha, \beta}^{(i)}\) with a unique location in \(w^{(i)}\) in some canonical way.  We call this location \((\alpha, \beta)\) and use \(w^{(i)}[(\alpha, \beta)]\) to denote the symbol residing there. Finally, for each group \(I_{\alpha, \beta}^{(i)}\), we shuffle its contents according to the ground permutation \(\sigma_{w^{(i)}[(\alpha, \beta)]} \in D\), where \(w^{(i)}[(\alpha, \beta)]\in[p]\) is the symbol at location \((\alpha, \beta)\) in \(w^{(i)}\). 
 More formally, for each \((\alpha, \beta)\in [q]^{i-1}\times[q]^{\ell-i}\) and \(x\in [q]\), we set: \[\pi^{(i)}[\alpha  x  \beta] := \pi^{(i-1)}[\alpha  y\beta],\quad\text{where}\quad y:=\sigma_{w^{(i)}[(\alpha,\beta)]}[x].\] 
 We thus obtain \(\pi^{(i)}\) from \(\pi^{(i-1)}\). This concludes the \(i^{\text{th}}\) stage. The process ends after stage \(\ell\), and we end up with some permutation \(\pi^{(\ell)}\in \mathcal{S}_n\) depending on the sequence of shufflers used.

For every $\ell$-tuple of strings  in $C$, i.e., \(\w=(w^{(1)}, w^{(2)}, w^{(3)}, \ldots, w^{(\ell)}) \in C^{\ell}\), we denote by \(\pi_{\w}\) the permutation obtained after stage \(\ell\) in this process by using the \(w^{(i)}\)'s as shufflers. The set \(S\) we consider is simply the set of all permutations that can be generated by each \(\ell\)-tuple of shufflers. In other words, we consider:
\[S=\{\pi_{\w} : \w\in C^{\ell}\}.\]

\begin{figure}
    \centering
\resizebox{0.7\textwidth}{!}{\includegraphics[scale=0.9]{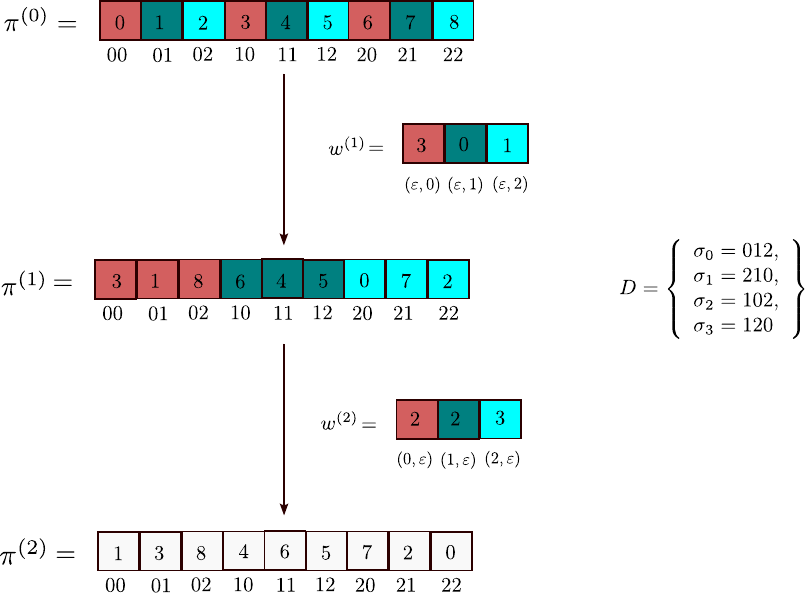}}    \caption{We illustrate in this figure the construction of a codeword described in Section~\ref{sec:construction}, for the setting where $q=3, \ell=2, n=9, p=4, D = \{\sigma_0=012,\sigma_1=210, \sigma_2=102, \sigma_3=120 \},$ and $ \w=(w^{(1)}=301, w^{(2)}=223)$.}
    \label{fig2}
\end{figure}

We have provided in Figure~\ref{fig2}, an illustration of the above construction for $q=3$ and $\ell=2$.

\section{Parameters of the Code}
\label{sec:params}
In this section, we prove the rate, distance, and encoding claims of Theorem~\ref{thm:main}. 
To achieve the guarantees claimed in Theorem~\ref{thm:main}, we choose \(D\) and \(C\) with certain parameters. First, we construct $D$ such  that \(p=\lvert D \rvert \geq q^{\epsilon_D}\) for some constant \(\epsilon_D\in (0, 1)\) and for every distinct \(\sigma_i, \sigma_j \in D\), we have \(L(\sigma_i, \sigma_j) \leq q^{\delta_D}\) for some constant \(\delta_D\in (0, 1)\). The existence of such a set \(D\) with the above properties is always guaranteed due to Gilbert-Varshamov bound type arguments \cite[Proposition 7]{farnoud2013error}, and moreover, since \(q\) is a constant, one can simply brute force all permutations of \([q]\) to find such a \(D\). Later in Section~\ref{sec:ground-set-construction}, we even show how to construct $D$ explicitly without resorting to brute-force search.

Second, we construct $C$ such that \(\lvert C \rvert \geq \left(p^{\frac{n}{q}}\right)^{R_C}\) for some constant \(R_C>0\), and each pair of strings in \(C\) has relative Hamming distance at least \(\delta_C\) for some constant \(\delta_C>0\). This can be done by letting \(C\) be the set of codewords of some asymptotically good code (in the Hamming metric) over alphabet \([p]\) with rate and relative distance \(R_C\) and \(\delta_C\), respectively. With such sets \(D\) and \(C\) in hand, we make the following claim.
\begin{claim}
\label{claim:main}
    Let \(D\) and \(C\) be sets satisfying the properties above and let \(S\) be defined as in Section~\ref{sec:construction}. Then the following holds.
    \begin{itemize} 
        \item \(\lvert S \rvert \geq \lvert \mathcal{S}_n \rvert^{\frac{\epsilon_DR_C}{q}}\),
        \item For distinct \(\pi, \pi'\in S\), \(L(\pi, \pi') \leq \left(\frac{\delta_C}{q^{1-\delta_D}}+(1-\delta_C)\right)n\).
    \end{itemize}
\end{claim}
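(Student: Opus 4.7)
My plan is to prove the two bullets separately. For the size lower bound, I would first show that the map $\mathbf{w}\mapsto \pi_{\mathbf{w}}$ from $C^{\ell}$ into $\mathcal{S}_n$ is injective, so that $|S|=|C|^{\ell}$. Given the output $\pi_{\mathbf{w}}$, one recovers $\mathbf{w}$ stage by stage in reverse order: the last stage acts on contiguous blocks of $q$ consecutive positions using the ground permutations in $D$, so $w^{(\ell)}$ can be read off block by block (using that the elements of $D$ are pairwise distinct), after which stage $\ell$ can be undone to recover $\pi^{(\ell-1)}$; iterating the procedure with a suitable invariant recovers all of $\mathbf{w}$. Then $|S|=|C|^{\ell}\geq (p^{n/q})^{R_C\ell}$ combined with $p\geq q^{\epsilon_D}$ and $n=q^{\ell}$ gives $|S|\geq n^{\epsilon_D R_C n/q}\geq (n!)^{\epsilon_D R_C/q}$, matching the first bullet.

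For the distance bound, I fix distinct $\pi=\pi_{\mathbf{w}}$ and $\pi'=\pi_{\mathbf{w}'}$ and pick a stage $j$ with $w^{(j)}\neq w'^{(j)}$; by the Hamming-distance property of $C$, stage $j$ has at least $d\geq \delta_C\cdot n/q$ disagreeing coordinates $(\alpha,\beta)$ and $a:=n/q-d$ agreeing ones. The target inequality is
\[
L(\pi,\pi')\;\leq\; q\,a\;+\;q^{\delta_D}\,d,
\]
which after straightforward algebra becomes $L(\pi,\pi')\leq \bigl(\tfrac{\delta_C}{q^{1-\delta_D}}+(1-\delta_C)\bigr)n$, as claimed.

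To justify the key inequality, I would use the identity $L(\pi,\pi')=\mathrm{LIS}(\pi'^{-1}\pi)$ and sub-additivity of $\mathrm{LIS}$ across a partition of $[n]$ into $n/q$ groups, one per stage-$j$ coordinate. For each agreeing coordinate the corresponding group contributes at most $q$ trivially. For each disagreeing coordinate, the $q$ positions of the group carry the same $q$ values in $\pi^{(j)}$ and $\pi'^{(j)}$ but arranged by two distinct ground permutations $\sigma_u,\sigma_{u'}\in D$; since $L(\sigma_u,\sigma_{u'})\leq q^{\delta_D}$ by construction of $D$, the $\mathrm{LIS}$ contribution from that group is at most $q^{\delta_D}$.

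The main obstacle is carrying the decomposition through to the final permutations $\pi,\pi'$ when stages other than $j$ also differ between $\mathbf{w}$ and $\mathbf{w}'$. Choosing $j$ to be the largest index of disagreement makes the suffix stages $g_{j+1},\ldots,g_{\ell}$ common to both sides, so $\pi'^{-1}\pi$ factorizes as a conjugate $H^{-1}\psi H$ where $H=g_{j+1}\cdots g_{\ell}$ preserves the first $j$ digits of every position and $\psi$ captures the differing prefix stages. Because the stage-$j$ groups are not contiguous in the natural ordering when $j<\ell$, the per-group $\mathrm{LIS}$ contributions must be combined carefully after this conjugation; I expect the cleanest approach to be a position-counting argument within each ``super-block'' of positions sharing the first $j$ digits, possibly supported by an induction on $\ell-j$ exploiting that the sub-permutation inside a super-block is itself an $(\ell-j+1)$-stage instance of the same construction.
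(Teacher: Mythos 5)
Your treatment of the first bullet is fine: injectivity of $\mathbf{w}\mapsto\pi_{\mathbf{w}}$ can indeed be established by peeling off stages in reverse (the invariant you need is that before stage $i$ acts, digits $i,\ldots,\ell$ of each symbol's position are still its original base-$q$ digits), and the arithmetic matches the paper's. (The paper gets injectivity for free from the distance bound, but your route is legitimate.)

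The distance bound, however, has a genuine gap, and it is exactly the obstacle you flag at the end without resolving. You choose $j$ to be the \emph{largest} index of disagreement. With that choice, the stages $1,\ldots,j-1$ may already differ between $\mathbf{w}$ and $\mathbf{w}'$, so $\pi^{(j-1)}\neq\pi'^{(j-1)}$ in general, and your key per-group claim --- ``the $q$ positions of the group carry the same $q$ values in $\pi^{(j)}$ and $\pi'^{(j)}$'' --- is false: which symbols occupy a stage-$j$ group $I^{(j)}_{\alpha,\beta}$ depends on the (possibly differing) prefix shufflers, since the first $j-1$ position-digits have already been scrambled differently. Subadditivity of LCS (or of $\mathrm{LIS}$ of the relative permutation) requires partitioning the \emph{symbols}, and the same partition must be used on both sides; a partition by positions does not decompose LCS. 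Your fallback of conjugating by the common suffix $H$ does not rescue this, because LCS of permutations is not invariant under a common right-composition, and the ``super-block'' induction is left entirely unproved. The fix is to take $j$ to be the \emph{smallest} index of disagreement: then $\pi^{(j-1)}=\pi'^{(j-1)}$, each group $A_{\alpha,\beta}$ consists of the same $q$ symbols in both strings, and the per-group bounds ($q^{\delta_D}$ on disagreeing coordinates, $q$ on agreeing ones) hold at stage $j$. One then still needs a second observation you do not supply for this case: the later stages $k>j$ (which may disagree arbitrarily) cannot increase $L(\pi^{(k)}|_{A_{\alpha,\beta}},\pi'^{(k)}|_{A_{\alpha,\beta}})$, because two symbols in the same group share position-digits $1,\ldots,j-1$ forever and their (distinct) $j$-th position-digits are never touched after stage $j$, so their relative order is frozen; since LCS restricted to a symbol set depends only on relative orders, the stage-$j$ bound passes to $\pi,\pi'$.
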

\begin{proof}
    We first prove the second claim. Let \(\w, \mathbf{w'} \in C^\ell\) such that \(\w\neq \mathbf{w'}\) and consider the pair of strings \(\pi = \pi_{\w}\) and \(\pi' = \pi_{\mathbf{w'}}\). For \(1\leq i \leq \ell\), let \(\pi^{(i)}\) and \(\pi'^{(i)}\) be the strings obtained at the end of stage \(i\) when using \(\w\) and \(\mathbf{w'}\), respectively, as the sequence of shufflers. By this notation, \(\pi^{(\ell)} =\pi\) and \(\pi'^{(\ell)} = \pi'\). The key observation is that if different shufflers are used during some stage, it forces the resulting strings to have small longest common subsequences. Furthermore, this can not be undone during subsequent stages -- the length of a longest common subsequence can only go down with each stage.

    We now make this observation precise. For any fixed subset \(A\subseteq [n]\), denote by \(\pi|_{A}\) the string obtained by deleting from \(\pi\) all the symbols in \([n]\setminus A\).  Let  \(\w=(w^{(1)}, w^{(2)}, \ldots, w^{(\ell)})\),  \(\mathbf{w'}=({w'}^{(1)}, {w'}^{(2)}, \ldots, {w'}^{(\ell)})\), and consider the smallest integer \(j\) such that \(w^{(j)} \neq {w'}^{(j)}\). Our first step is to bound \(L(\pi^{(j)}, \pi'^{(j)})\) using the following simple observation. 
    \begin{obs}
    \label{obs:restrict}
        Let $A_1\dot\cup \cdots \dot\cup A_k$ be a partition of \([n]\) for some $k\in\mathbb{N}$. Then for every pair of strings \(\pi, \pi' \in \mathcal{S}_n\), we have \(L(\pi, \pi')\leq \sum_{i=1}^k L(\pi|_{A_i}, \pi'|_{A_i})\).
    \end{obs}
    \begin{proof}
        Let \(\rho\) be a longest common subsequence of \(\pi\) and \(\pi'\). Then \(L(\pi, \pi') = \lvert \rho \rvert = \sum_{i=1}^k\lvert\rho|_{A_i}\rvert \leq \sum_{i=1}^k L(\pi|_{A_i}, \pi'|_{A_i})\).
    \end{proof}
    Now for each \((\alpha, \beta) \in [q]^{j-1}\times [q]^{\ell-j}\), define the set \(A_{\alpha, \beta}=\{\pi^{(j-1)}[\alpha  x  \beta]=\pi'^{(j-1)}[\alpha  x  \beta] = \phi_q^{-1}(\alpha  x  \beta) : x\in [q]\}\). Clearly, the sets \((A_{\alpha, \beta})_{(\alpha, \beta) \in [q]^{j-1}\times [q]^{\ell-j}}\) form a partition of \([n]\). Therefore, by Observation~\ref{obs:restrict}, we have \[L\left(\pi^{(j)}, \pi'^{(j)}\right) \leq \sum_{(\alpha, \beta) \in [q]^{j-1}\times [q]^{\ell-j}} L\left(\pi^{(j)}|_{A_{\alpha, \beta}}, \pi'^{(j)}|_{A_{\alpha, \beta}}\right).\]
    Furthermore, we have \(L\left(\pi^{(j)}|_{A_{\alpha, \beta}}, \pi'^{(j)}|_{A_{\alpha, \beta}}\right) = L\left(\sigma_{w^{(j)}[(\alpha, \beta)]}, \sigma_{w'^{(j)}[(\alpha, \beta)]}\right)\). So, if \(w^{(j)}[(\alpha, \beta)] \neq w'^{(j)}[(\alpha, \beta)]\), then \(L\left(\pi^{(j)}|_{A_{\alpha, \beta}}, \pi'^{(j)}|_{A_{\alpha, \beta}}\right)\leq q^{\delta_D}\). Otherwise, \(L\left(\pi^{(j)}|_{A_{\alpha, \beta}}, \pi'^{(j)}|_{A_{\alpha, \beta}}\right) = q\). Since \(w^{(j)}\) and \(w'^{(j)}\) are distinct codewords of a code with relative distance \(\delta_C\), we have:
    \begin{align*}
      L\left(\pi^{(j)}, \pi'^{(j)}\right) &\leq \sum_{(\alpha, \beta) \in [q]^{j-1}\times [q]^{\ell-j}} L\left(\pi^{(j)}|_{A_{\alpha, \beta}}, \pi'^{(j)}|_{A_{\alpha, \beta}}\right)\\
      &=\sum_{\substack{(\alpha, \beta) \in [q]^{j-1}\times [q]^{\ell-j}\\w^{(j)}[(\alpha, \beta)] \neq w'^{(j)}[(\alpha, \beta)]}} L\left(\pi^{(j)}|_{A_{\alpha, \beta}}, \pi'^{(j)}|_{A_{\alpha, \beta}}\right)+\sum_{\substack{(\alpha, \beta) \in [q]^{j-1}\times [q]^{\ell-j}\\w^{(j)}[(\alpha, \beta)] = w'^{(j)}[(\alpha, \beta)]}} L\left(\pi^{(j)}|_{A_{\alpha, \beta}}, \pi'^{(j)}|_{A_{\alpha, \beta}}\right)\\
      &\leq \delta_C \cdot \frac{n}{q}\cdot q^{\delta_D} + (1-\delta_C)\cdot\frac{n}{q}\cdot q\\
      &=\left(\frac{\delta_C}{q^{1-\delta_D}}+(1-\delta_C)\right)n
    \end{align*}
    Furthermore, since for any \(m\in [n]\), the \(j^{\text{th}}\) symbol of the base-\(q\) representation of the location of \(m\) is forever fixed after stage \(j\), during any stage \(k\) with \(k>j\), the relative order of the symbols of \(A_{\alpha, \beta}\) in \(\pi^{(k)}\) and \(\pi'^{(k)}\) does not change with respect to \(\pi^{(j)}\) and \(\pi'^{(j)}\), respectively, for any \((\alpha, \beta) \in [q]^{j-1}\times [q]^{\ell-j}\). In other words, \(L\left(\pi^{(k)}|_{A_{\alpha, \beta}}, \pi'^{(k)}|_{A_{\alpha, \beta}}\right)=L\left(\pi^{(j)}|_{A_{\alpha, \beta}}, \pi'^{(j)}|_{A_{\alpha, \beta}}\right)\) for every \(k>j\) and \((\alpha, \beta) \in [q]^{j-1}\times [q]^{\ell-j}\). The claim then follows, and we have \(L(\pi, \pi') \leq \left(\frac{\delta_C}{q^{1-\delta_D}}+(1-\delta_C)\right)n\).

    It only remains to bound the size of \(S\). Note that: \[\lvert S \rvert = \lvert C^{\ell}\rvert = \left(p^{\frac{nR_C}{q}}\right)^{\log_q n}= \left((q^{\epsilon_D})^{\frac{nR_C}{q}}\right)^{\log_q n} = (n^n)^{\frac{\epsilon_D R_C}{q}} \geq (n!)^{\frac{\epsilon_D R_C}{q}} = \lvert \mathcal{S}_n \rvert^{\frac{\epsilon_D R_C}{q}}.\qedhere\]\end{proof}

Equipped with Claim~\ref{claim:main}, we can now prove the rate, distance, and encoding claims of Theorem~\ref{thm:main}. To prove the rate claim, we just set \(\epsilon =  \frac{\epsilon_D R_C}{q}\) and we have $M=|S|\ge  (n!)^{\epsilon}$. 

Since Ulam distance is \(n\) minus the longest common subsequence length, for distinct \(\pi, \pi'\in S\), we have \(d_U(\pi, \pi') \geq \delta_C\left(1-\frac{1}{q^{1-\delta_D}}\right)n\). To match the \(\left(1-\frac{1}{\mathrm{poly}(q)}\right)n\) distance lower bound, we can make use of AG codes which let us achieve \(\delta_C \geq 1-\frac{3}{\sqrt{q^{\epsilon_D}}}\) while still keeping \(R_C\) at least a constant~\cite{garcia1996asymptotic} (in particular, we can have \(R_C\ge \frac{1}{\sqrt{q^{\epsilon_D}}}\)). The use of AG code does require \(p\) to be a prime square greater than or equal to 49. However, this issue can be handled without affecting the rate and distance guarantees in the following way. 
First we pick \(q\) large enough so that \(q^{\frac{\epsilon_D}{2}} \geq 49\). Now if \(p \geq q^{\epsilon_D}\) is already a prime square, we are done. Otherwise, we find a prime square \(p' \in [p/4, p]\), which is guaranteed to exist as a consequence of Bertrand's postulate. We then use AG codes over \([p']\) for our shuffler set \(C\). Note that \(p' \geq \frac{p}{4} \geq \frac{q^{\epsilon_D}}{4} \geq q^{\frac{\epsilon_D}{2}}\). So, using \(p'\) instead of \(p\) can only make the rate of our code go down by at most a factor of 2 while still keeping the relative distance at least \(\left(1-\frac{1}{\mathrm{poly}(q)}\right)\).

Finally, to see that for every \(\x\in [M]\) we can  output $S(\x)$ in \(\poly(n)\) time, note from the construction in Section~\ref{sec:construction} that the run time is primarily determined by the time needed to encode $\x$ in order to obtain $\w\in C^{\ell}$. This encoding is possible because we can first assume that $x$ is a vector in $\left[M^{1/\ell}\right]^{\ell}$, denoted by $(x_1,\ldots ,x_{\ell})$, and consider some canonical ordering of   $\left[p\right]^{\frac{nR_C}{q}}$, denoted by $\theta\colon \left[p^{\frac{nR_C}{q}}\right]\to\left[p\right]^{\frac{nR_C}{q}} $. Then for each $x_i\in \left[p^{\frac{nR_C}{q}}\right]$ (where $i\in \{1, 2, \ldots , \ell\}$),  we construct the AG code encodings of $\theta(x_1)$, \ldots, $\theta(x_\ell)$ to obtain an element in $C^{\ell}$. 
Since AG codes can be encoded in time which is polynomial in the block length \cite{SAKSD01}, we can thus output $S(\x)$ in \(\poly(n)\) time.

\subsection{Explicit Construction of the Ground Permutation Set}
\label{sec:ground-set-construction}
Here we show how to explicitly construct the ground permutation set \(D\) without resorting to brute-force search. Recall that the ground permutation set \(D\subseteq \mathcal{S}_q\) is a set of permutations such that the following two conditions hold:
\begin{itemize}
    \item \(\lvert D \rvert \geq q^{\epsilon_D}\) for some constant \(\epsilon_D \in (0, 1)\) independent of \(q\).
    \item For distinct \(\sigma, \sigma'\in D\), \(L(\sigma, \sigma')\leq q^{\delta_D}\) for constant \(\delta_D\in (0, 1)\) independent of \(q\).
\end{itemize}

The construction we give is largely inspired by~\cite{beame2009longest}. Assume \(q\) to be a power of two and let \(G\subseteq \{0, 1\}^{\log_2 q}\) be a code with rate \(\epsilon_D\) and relative Hamming distance \((1-\delta_D)\). For each \(g\in G\), we define the permutation \(\sigma_g\in \mathcal{S}_q\) as follows. For each \(x\in [q]\), we set \(\sigma_g[i]=x\) if and only if \(i\) equals the bitwise XOR of \(x\) and \(g\). Finally, we set \(D=\{\sigma_g\in \mathcal{S}_q : g\in G\}\).

It is immediately clear that \(\lvert D \rvert = \lvert G \rvert \geq \left(2^{\log_2 q}\right)^{\epsilon_D}=q^{\epsilon_D}\). Assume for the sake of contradiction that there exist distinct \(\sigma_g, \sigma_{g'}\in D\) such that \(L(\sigma_g, \sigma_{g'}) > q^{\delta_D}\). First we make the following observation.
\begin{obs}
\label{obs:ground}
    Let \(g\in G\) and \(x, y\in [q]\). If \(m\in [\log_2 q ]\) is the smallest integer for which \(\phi_2(x)[m] \neq \phi_2(y)[m]\) (where $\phi_2$ maps to the binary representation), then the relative order of \(x\) and \(y\) in \(\sigma_g\) is completely determined by the bit \(g[m]\).
\end{obs}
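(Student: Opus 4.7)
The plan is to unpack the construction directly: by definition, $\sigma_g[i]=x$ exactly when $i=x\oplus g$, so the position at which the symbol $x$ appears in $\sigma_g$ is the integer whose binary representation is $\phi_2(x)\oplus g$ (and likewise for $y$). Consequently, the relative order of $x$ and $y$ in $\sigma_g$ is nothing but the numerical comparison between the integers $x\oplus g$ and $y\oplus g$. So the whole observation reduces to a statement about how integer comparison interacts with XOR.

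For binary integer comparison, the order of two distinct integers is decided by their most significant differing bit; in the paper's indexing convention, this is the bit at the smallest index on which they disagree. The first step I would spell out is that XOR with a fixed string $g$ is a bitwise operation and therefore preserves, at every index, whether two bit-strings agree: $(\phi_2(x)\oplus g)[k]=(\phi_2(y)\oplus g)[k]$ if and only if $\phi_2(x)[k]=\phi_2(y)[k]$. In particular, $\phi_2(x)\oplus g$ and $\phi_2(y)\oplus g$ agree on all indices $k<m$ and disagree at index $m$, so $m$ is again their smallest differing index.

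The second step is to read off the bit at position $m$: it is $\phi_2(x)[m]\oplus g[m]$ on one side and $\phi_2(y)[m]\oplus g[m]$ on the other. Once $x$ and $y$ are fixed, $\phi_2(x)[m]$ and $\phi_2(y)[m]$ are fixed complementary bits, so which of the two values $x\oplus g$, $y\oplus g$ has a $0$ at index $m$ (and is therefore the smaller integer) is completely determined by $g[m]$. Flipping $g[m]$ flips the order; no other bit of $g$ enters the comparison. This establishes the observation.

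I do not anticipate any real obstacle: the argument is essentially a one-line unfolding of the definition plus the remark that XOR commutes with ``differs in bit $k$''. The only thing worth being pedantic about when writing the final proof is the indexing convention of $\phi_2$ (smallest index $=$ most significant bit), since the word ``smallest'' in the statement is only meaningful once that convention is in place.
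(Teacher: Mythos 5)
Your argument is correct: the paper states Observation~\ref{obs:ground} without proof, and your unfolding of the definition (symbol $x$ sits at position $x\oplus g$, XOR with $g$ preserves the set of differing bit positions, and integer comparison is decided by the most significant differing bit, which under the paper's indexing is the smallest index $m$) is exactly the intended justification. The one convention you flag — that the smallest index of $\phi_2$ is the most significant bit — is indeed the paper's convention (cf.\ the indexing $\pi[\alpha x\beta]$ with $\alpha$ the prefix), so there is no gap.
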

Now let \(I\subseteq [\log_2 q]\) be the set of locations \(i\) such that \(g[i]=g'[i]\). Since \(G\) has relative distance \((1-\delta_D)\), we have \(\lvert I \rvert \leq \delta_D\log_2 q\). Since \(L(\sigma_g, \sigma_{g'}) > q^{\delta_D}\), by the pigeonhole principle, there exist distinct \(x, y\) in the longest common subsequence of \(\sigma_g\) and \(\sigma_{g'}\) such that \(\phi_2(x)[i]=\phi_2(y)[i]\) for each \(i\in I\). However, note that if \(m\in [\log_2 q]\) is the smallest integer for which \(\phi_2(x)[m] \neq \phi_2(y)[m]\), then \(m\notin I\). So, by Observation~\ref{obs:ground}, the relative order of \(x\) and \(y\) is different in \(\sigma_g\) and \(\sigma_{g'}\), and as such, they can not be in a common subsequence of \(\sigma_g\) and \(\sigma_{g'}\), which is a contradiction. Therefore, \(L(\sigma_g, \sigma_{g'}) \leq q^{\delta_D}\) for all distinct \(\sigma_g, \sigma_{g'}\in D\).

\subsection{Comparison with the Construction of Goldreich and Wigderson}\label{sec:gw23}

Our construction can be viewed as a generalization of the construction of good permutation codes in the Hamming metric given by Goldreich and Wigderson~\cite{goldreich2020constructing}. In their construction, certain locations in the permutation are paired and then either swapped or not swapped. This can be viewed as setting both \(p=q=2\) and letting the ground permutation set \(D\) consist of the two possible permutations of \(\{0, 1\}\) in our construction. We note that by using the same arguments as in the proof of Claim~\ref{claim:main}, one can show that the Goldreich-Wigderson construction without any modification already gives asymptotically good permutation codes not only in the Hamming metric but also in the Ulam metric. However, setting \(p=q=2\) restricts the code from having relative Ulam distance higher than \(\frac{1}{4}\). In our construction, \(p\neq q\) and neither of them is necessarily equal to 2. Furthermore, our ground permutation set consists of permutations with relative Ulam distance \((1-o(1))\). Both of these contribute to achieving codes with constant rate and relative Ulam distance approaching 1.

\section{Efficient Decoding of the Code}

In this section, we prove the decoding claim of Theorem~\ref{thm:main}.
    Let \(\pi\in \mathcal{S}_n\) such that there exists a unique \(\pi^*\in S\) with \(d_U(\pi, \pi^*) < \frac{\delta_S}{4}n\). We show how to recover \(\pi^*\) from \(\pi\). Let \(\pi^* = \pi_{\mathbf{w^*}}\), where \(\mathbf{w^*}=(w^{*(1)}, w^{*(2)}, w^{*(3)}, \ldots, w^{*(\ell)})\). Just as before, for \(1\leq i \leq \ell\), we denote by \(\pi^{*(i)}\) the string obtained at the end of stage \(i\) when \(\mathbf{w^*}\) is used as the sequence of shufflers. Our decoding algorithm is also composed of \(\ell\) stages. During the \(i^{\text{th}}\) stage, our goal is to recover the shuffler \(w^{*(i)}\), and, consequently, \(\pi^{*(i)}\), in addition to correcting the relative order of some set of symbols in \(\pi\). We do this by first looking at \(\pi^{*(i-1)}\) (which we already have at the end of stage \((i-1)\)), and forming a partition of \([n]\). For each set in this partition, we then try to guess the relative order of all the symbols in that set in \(\pi^*\). For every set for which we have a correct guess, we are successful in correctly determining one symbol of \(w^{*(i)}\) at some specified location. The key insight is that since \(\pi\) has small Ulam distance to \(\pi^*\), most of our guesses will be correct, and as a result, the shuffler we end up guessing will not be too far from the true shuffler \(w^{*(i)}\). We can then recover the true shuffler \(w^{*(i)}\) by using the decoding algorithm for \(C\). Finally, by using \(w^{*(i)}\), we obtain \(\pi^{*(i)}\), and for each set of the partition previously formed, we correct in \(\pi\) the relative order of all the symbols in that set. Details follow.

    Assume, we are in the \(i^{\text{th}}\) stage and we already have \(\pi^{*(i-1)}\). For each \((\alpha, \beta) \in [q]^{i-1}\times [q]^{\ell-i}\), we first define the set \(A_{\alpha, \beta}=\{\pi^{*(i-1)}[\alpha  x  \beta]: x\in [q]\}\). We then find the permutation \(\sigma_c\in D\) that minimizes the Ulam distance between \(\pi|_{A_{\alpha, \beta}}\) and \(\sigma_c\) applied to \(\pi^{*(i-1)}|_{A_{\alpha, \beta}}\). Since \(\lvert D \rvert\) is constant, this can be done in constant time. We then set \(w^{(i)}[(\alpha, \beta)] =c\). After all the \(w^{(i)}[(\alpha, \beta)]\)s have been set in this way, we run the decoding algorithm for \(C\) on \(w^{(i)}\) to obtain \(w^{*(i)}\). Finally, by using \(w^{*(i)}\), we generate \(\pi^{*(i)}\) from \(\pi^{*(i-1)}\), and for each \((\alpha, \beta) \in [q]^{i-1}\times [q]^{\ell-i}\), shuffle the symbols in \(\pi\) so that \(\pi|_{A_{\alpha, \beta}} = \pi^{*(i)}|_{A_{\alpha, \beta}}\). This concludes the \(i^{\text{th}}\) stage.

  Note that $C$ is simply an AG code (as described in the previous section) and thus, the decoding algorithm is correct if \(d_H(w^{(i)}, w^{*(i)}) < \frac{\delta_C}{2}\cdot \frac{n}{q}\) (this is known from \cite{SAKSD01}; also see Guruswami's appendix in \cite{shpilka2009constructions}).  Recall that \(d_U(\pi, \pi^*) < \frac{\delta_S}{4}n =\frac{\delta_C}{4}\left(1-\frac{1}{q^{1-\delta_D}}\right)n\). By Observation~\ref{obs:restrict}, \(d_U(\pi, \pi^*)\geq \sum_{(\alpha, \beta) \in [q]^{i-1}\times [q]^{\ell-i}} d_U\left(\pi|_{A_{\alpha, \beta}}, \pi^*|_{A_{\alpha, \beta}}\right)\).  Call a pair \((\alpha, \beta)\in [q]^{i-1}\times [q]^{\ell-i}\) \textit{good} if \(d_U\left(\pi|_{A_{\alpha, \beta}}, \pi^*|_{A_{\alpha, \beta}}\right) < \frac{q}{2}\left(1-\frac{1}{q^{1-\delta_D}}\right)\). We claim that at least \((1-\frac{\delta_C}{2})\)-fraction of the pairs \((\alpha, \beta)\) are good. Assume otherwise, then for more than \(\frac{\delta_C}{2}\)-fraction of the pairs \((\alpha, \beta)\), \(d_U\left(\pi|_{A_{\alpha, \beta}}, \pi^*|_{A_{\alpha, \beta}}\right) \geq \frac{q}{2}\left(1-\frac{1}{q^{1-\delta_D}}\right)\). In total, these pairs will contribute at least \(\frac{\delta_C }{2}\cdot \frac{n}{q}\cdot \frac{q}{2}\left(1-\frac{1}{q^{1-\delta_D}}\right) = \frac{\delta_C}{4}\left(1-\frac{1}{q^{1-\delta_D}}\right)n\) to the sum, which leads to a contradiction.

    The final observation is that the strings in \(D\) have pairwise Ulam distance at least \(q\left(1-\frac{1}{q^{1-\delta_D}}\right)\). So, if a pair \((\alpha, \beta)\) is good, i.e., \(d_U\left(\pi|_{A_{\alpha, \beta}}, \pi^*|_{A_{\alpha, \beta}}\right) < \frac{q}{2}\left(1-\frac{1}{q^{1-\delta_D}}\right)\), then \(w^{(i)}[(\alpha, \beta)]\) is set correctly, i.e., \(w^{(i)}[(\alpha, \beta)] = w^{*(i)}[(\alpha, \beta)]\). Since we have just shown that there are at least \((1-\frac{\delta_C}{2})\frac{n}{q}\) good pairs, it follows that \(d_H(w^{(i)}, w^{*(i)}) < \frac{\delta_C}{2}\cdot \frac{n}{q}\).

\section{Conclusion}
    In this paper, we constructed codes with positive constant rate and relative distance arbitrarily approaching 1. We also provided efficient encoding and decoding algorithms for our code.  

    A natural open question is to improve the parameters of our construction to meet the parameters obtained by random codes in the Ulam metric (via sampling). Another question is to improve the decoding radius of our algorithm.
    It would also be interesting to explore concrete applications of high-distance codes in the Ulam metric to concrete problems in Theoretical Computer Science (for example, in rank aggregation).

\subsection*{Acknowledgements}
We thank Venkatesan Guruswami for helping us with some references. 
\bibliographystyle{alpha}
\bibliography{bibliography.bib}

    

\end{document}